\begin{document}

\title{On nonpermutational transformation semigroups with an application to syntactic complexity}
\author{Szabolcs~Iv\'an \and Judit~Nagy-Gy\"orgy}
\institute{University of Szeged}

\setlength{\parindent}{0pt}
\setlength{\parskip}{6pt}
\linenumbers

\let\doendproof\endproof
\renewcommand\endproof{~\hfill$\qed$\doendproof}

\maketitle

\begin{abstract}
We give an upper bound of $n((n-1)!-(n-3)!)$ for the possible largest size of a subsemigroup of the
full transformational semigroup over $n$ elements
consisting only of nonpermutational transformations.
As an application we gain the same upper bound for the syntactic complexity of
(generalized) definite languages as well.
\end{abstract}

\section{Introduction}

A language is generalized definite if membership can be decided for a word by looking at its prefix and suffix of a given constant length.
Generalized definite languages and automata were introduced by Ginzburg~\cite{ginzburg} in 1966
and further studied in e.g.~\cite{ciricimrehsteinby,gecsegimreh,petkovic,steinby}.
This language class is strictly contained within the class of star-free languages, lying on the first level of the
dot-depth hierarchy~\cite{dotdepth}.
This class possess a characterization in terms of its syntactic semigroup \cite{perrin}: a regular language is generalized definite
if and only if its syntactic semigroup is locally trivial if and only if it satisfies a certain identity $x^\omega yx^\omega=x^\omega$.
This characterization is hardly efficient by itself when the language is given by its minimal automaton,
since the syntactic semigroup can be much larger than the automaton (a construction for a definite language with state complexity
-- that is, the number of states of its minimal automaton -- $n$
and syntactic complexity -- that is, the size of the transition semigroup of its minimal automaton --
$\lfloor e(n-1)!\rfloor$ is explicit in \cite{brzozo}).
However, as stated in~\cite{handbook}, Sec.~5.4, it is usually not necessary to compute the (ordered) syntactic semigroup
but most of the time one can develop a more efficient algorithm by analyzing the minimal automaton.
As an example for this line of research, recently, the authors of \cite{klima-polak} gave a nice characterization of minimal automata of
piecewise testable languages, yielding a quadratic-time decision algorithm, matching an alternative (but of course equivalent)
earlier (also quadratic) characterization of~\cite{trahtman} which improved the $\mathcal{O}(n^5)$ bound of~\cite{Stern}.

There is an ongoing line of research for syntactic complexity of regular languages.
In general, a regular language with state complexity $n$ can have a syntactic complexity of $n^n$, already in the case
when there are only three input letters. There are at least two possible modifications of the problem:
one option is to consider the case when the input alphabet is binary (e.g. as done in \cite{holzer-konig,krawetz-lawrence-shallit}).
The second option is to study a strict subclass of regular languages.
In this case, the syntactic complexity of a class $\mathcal{C}$ of languages is a function $n\mapsto f(n)$,
with $f(n)$ being the maximal syntactic complexity a member of $\mathcal{C}$ can have whose state complexity
is (at most) $n$. The syntactic complexity of several language classes, e.g. (co)finite, reverse definite,
bifix--, factor-- and subword-free languages etc. is precisely determined in \cite{limsc}.
However, the exact syntactic complexity of the (generalized) definite languages and that of the star-free languages
(as well as the locally testable or the locally threshold testable languages) is not known yet.

In this note we give an upper bound for the maximal size of a subsemigroup of $T_n$, the 
transformation semigroup of $\{1,\ldots,n\}$, consisting of ``nonpermutational'' transformations only.
These are exactly the (transformation) semigroups satisfying the identity $yx^\omega=x^\omega$. 
It is known that a language is definite iff its syntactic semigroup satisfies the same identity;
thus as a corollary we get that the same bound is also an upper bound for the syntactic complexity
of definite languages. 

We also give a forbidden pattern characterization for the generalized definite languages in terms of the minimal automaton,
and analyze the complexity of the decision problem whether a given automaton recognizes a generalized definite language,
yielding an $\mathbf{NL}$-completeness result (with respect to logspace reductions) as well as a deterministic decision
procedure running in $\mathcal{O}(n^2)$ time (on a RAM machine).
Analyzing the structure of their minimal automata
we conclude that the syntactic complexity of generalized definite languages coincide with
that of definite languages.

\section{Notation}
When $n\geq 0$ is an integer, $[n]$ stands for the set $\{1,\ldots,n\}$.
For the sets $A$ and $B$, $A^B$ denotes the set of all functions $f:B\to A$. When $f\in A^B$ and $C\subseteq B$,
then $f|_C\in A^C$ denotes the restriction of $f$ to $C$. When $A_1,\ldots,A_n$ are disjoint sets, $A$ is a set
and for each $i\in[n]$, $f_i:A_i\to A$ is a function, then the \emph{source tupling} of $f_1,\ldots,f_n$ is
the function $[f_1,\ldots,f_n]:\bigl(\mathop\bigcup\limits_{i\in[n]}A_i\bigr)\to A$ with
$a[f_1,\ldots,f_n]=af_i$ for the
unique $i$ with $a\in A_i$.\marginpar{$[f_1,\ldots,f_n]$: source tupling}

$T_n$ is the transformation semigroup of $[n]$ (i.e. $[n]^{[n]}$), where composition
is understood as $p(fg):=(pf)g$ for $p\in[n]$ and $f,g:[n]\to[n]$ (i.e., transformations of $[n]$
act on $[n]$ from the right to ease notation in the automata-related part of the paper).
Elements of $T_n$ are often written as $n$-ary vectors as usual, e.g. $f=(1,3,3,2)$ is
the member of $T_4$ with $1f=1$, $2f=3$, $3f=3$ and $4f=2$.

When $f:A\to A$ is a transformation of a set $A$, and $X$ is a subset of $A$, then
$Xf$ denotes the subset $\{xf:x\in X\}$ of $A$.

A transformation $f:A\to A$ of a (finite) set $A$ is \marginpar{nonpermutational function}{\emph{nonpermutational}} if
$Xf=X$ implies $|X|=1$ for any nonempty $X\subseteq A$. Otherwise it's \emph{permutational}.
\marginpar{$NP_n$}{$NP_n$} stands for the set of all nonpermutational transformations of $[n]$.

Another class of functions used in the paper is that of the \emph{elevating} functions: for the integers $0< k\leq n$, a function
$f:[k]\to[n]$ is elevating if $i\leq if$ for each $i\in[k]$
with equality allowed only in the case when $i=n$ (note that this also implies $k=n$ as well).\marginpar{elevating function}

We assume the reader is familiar with the standard notions of automata and language theory,
but still we give a summary for the notation.

An \emph{alphabet} is a nonempty finite set $\Sigma$.
The set of \emph{words} over $\Sigma$ is denoted $\Sigma^*$, while $\Sigma^+$ stands for the set of \emph{nonempty words}.
The \emph{empty word} is denoted $\varepsilon$.
A \emph{language} over $\Sigma$ is an arbitrary set $L\subseteq\Sigma^*$ of $\Sigma$-words.

A (finite) \emph{automaton} (over $\Sigma$) is a system $\mathbb{A}=(Q,\Sigma,\delta,q_0,F)$ where
$Q$ is the finite set of states,
$q_0\in Q$ is the start state,
$F\subseteq Q$ is the set of final (or accepting) states,
and $\delta:Q\times \Sigma\to Q$ is the transition function.
The transition function $\delta$ extends in a unique way to a right action of the monoid $\Sigma^*$ on $Q$,
also denoted $\delta$ for ease of notation%
%(i.e. $\delta(q,\varepsilon)=q$ and $\delta(q,au)=\delta(\delta(q,a),u)$ for each $q\in Q$, $a\in\Sigma$ and $u\in\Sigma^*$)%
.
When $\delta$ is understood, we write $q\cdot u$, or simply $qu$ for $\delta(q,u)$.
Moreover, when $C\subseteq Q$ is a subset of states and $u\in\Sigma^*$ is a word, let $Cu$ stand for the set $\{pu:p\in C\}$
and when $L$ is a language, $CL=\{pu:p\in C,u\in L\}$.
The \emph{language recognized by $\mathbb{A}$} is $L(\mathbb{A})=\{x\in\Sigma^*:q_0x\in F\}$.
A language is \emph{regular} if it can be recognized by some finite automaton.

The state $q\in Q$ is \emph{reachable} from a state $p\in Q$ in $\mathbb{A}$, denoted $p\preceq_{\mathbb{A}} q$,
or just $p\preceq q$ if there is no danger of confusion, if $pu=q$ for some $u\in\Sigma^*$.
An automaton is \emph{connected} if its states are all reachable from its start state.
%The automaton $\mathbb{B}=(Q',\Sigma,\delta',q_0,F')$ is a \emph{sub-automaton} of $\mathbb{A}=(Q,\Sigma,\delta,q_0,F)$
%if $Q'\subseteq Q$, $F'=F\cap Q'$ and $\delta'(q,a)=\delta(q,a)$ for each $q\in Q'$.
%The above automaton $\mathbb{B}$ is also denoted by $\mathbb{A}|_{Q'}$.
%The \emph{connected part} of $\mathbb{A}$ is its smallest subautomaton $\mathbb{A}|_{Q'}$ with $Q'=\{q\in Q:q_0\preceq_{\mathbb{A}}q\}$
%being the set of its states reachable from $q_0$.

Two states $p$ and $q$ of $\mathbb{A}$ are \emph{distinguishable} if there exists a word $u\in\Sigma^*$ such that
exactly one of $pu$ and $qu$ belongs to $F$. In this case we say that $u$ \emph{separates} $p$ and $q$.
A connected automaton is called \emph{reduced} if each pair of distinct states is distinguishable.
%The automaton $\mathbb{B}=(Q',\Sigma,\delta',q_0',F')$ is a \emph{homomorphic image} of $\mathbb{A}$ if there exists a
%homomorphism from $\mathbb{A}$ onto $\mathbb{B}$, i.e. a surjective mapping $h:Q\to Q'$ with $h(\delta(q,a))=\delta'(h(q),a)$,
%$h(q_0)=q_0'$ and $q\in F\Leftrightarrow h(q)\in F'$ for each $q\in Q$, $a\in\Sigma$. An automaton $\mathbb{B}$ is a \emph{quotient}
%of $\mathbb{A}$ if $\mathbb{B}$ is a homomorphic image of some subautomaton of $\mathbb{A}$. Two automata are \emph{equivalent}
%if they recognize the same language; an automaton is equivalent with every quotient of it.

It is known that for each regular language $L$ there exists a reduced automaton, unique up to isomorphism, recognizing $L$.
This automaton $\mathbb{A}_L$ can be computed from any automaton recognizing $L$ by an efficient algorithm called minimization and is called the
\emph{minimal automaton} of $L$.\marginpar{$\mathbb{A}_L$}

The classes of the equivalence relation $p\sim q\ \Leftrightarrow p\preceq q\textrm{ and }q\preceq p$ are called \emph{components} of $\mathbb{A}$.
A component $C$ is \emph{trivial} if $C=\{p\}$ for some state $p$ such that $pa\neq p$ for any $a\in\Sigma$,
and is a \emph{sink} if $C\Sigma\subseteq C$. It is clear that each automaton has at least one sink and sinks are never trivial.\marginpar{(trivial) components and sinks}
The \emph{component graph} $\Gamma(\mathbb{A})$ of $\mathbb{A}$ is an edge-labelled directed graph $(V,E,\ell)$ along with a mapping $c:Q\to V$
where $V$ is the set of the $\sim$-classes of $\mathbb{A}$, the mapping $c$ associates to each state $q$ its class $q/\sim=\{p:p\sim q\}$
and for two classes $p/\sim$ and $q/\sim$ there exists an edge from $p/\sim$ to $q/\sim$ labelled by $a\in\Sigma$
if and only if $p'a=q'$ for some $p'\sim p$, $q'\sim q$.
It is known that the component graph can be constructed from $\mathbb{A}$ in linear time.
Note that the mapping $c$ is redundant but it gives a possibility for determining whether $p\sim q$ holds in constant time on a
RAM machine, provided $Q=[n]$ for some $n>0$ and $c$ is stored as an array.

When $\mathbb{A}=(Q,\Sigma,\delta,q_0,F)$ is an automaton, its \emph{transformation semigroup} $\mathcal{T}(\mathbb{A})$
consists of the set of transformations of $Q$ induced by nonempty words, i.e.
$\mathcal{T}(\mathbb{A})=\{u^\mathbb{A}:u\in\Sigma^+\}$ where $u^\mathbb{A}:Q\to Q$ is the transformation defined as $q\mapsto qu$.
The \emph{state complexity} $\textrm{stc}(L)$ of a regular language $L$ is the number of states of its minimal automaton $\mathbb{A}_L$ while its \emph{syntactic complexity} $\textrm{syc}(L)$ is the
cardinality of its transformation semigroup $\mathcal{T}(\mathbb{A}_L)$.
The \emph{syntactic complexity} of a \emph{class} of languages
$C$ is a function $f:\mathbb{N}\to\mathbb{N}$ defined as
\[f(n)=\max\{\mathrm{syc}(L):L\in C,\mathrm{stc}(L)\leq n\},\]
i.e. $f(n)$ is the maximal size that the transformation semigroup of a minimal automaton of a language
belonging to $C$ can have, provided the automaton has at most $n$ states.

\section{Semigroups of nonpermutational transformations}
Observe that $NP_n$ is not a semigroup (i.e., not closed under composition) when $n>2$.
Indeed, if $f=(2,3,3)$ and $g=(1,1,2)$ (both being nonpermutational), then
their product $fg=(1,2,2)$ is permutational with $\{1,2\}fg=\{1,2\}$.
(See Figure~\ref{fig-fgperm}.)

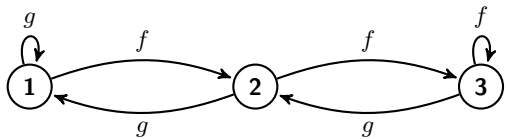
\begin{figure}[h!]\centering
\begin{tikzpicture}[->,>=stealth',shorten >=1pt,auto,node distance=3cm,
  thick,main node/.style={circle,draw,font=\sffamily\bfseries}]

  \node[main node] (1) {1};
  \node[main node] (2) [right of=1] {2};
  \node[main node] (3) [right of=2] {3};

  \path[->] (1) edge [bend left=20] node {$f$} (2);
  \path[->] (2) edge [bend left=20] node {$f$} (3);
  \path[->] (2) edge [bend left=20] node {$g$} (1);
  \path[->] (3) edge [bend left=20] node {$g$} (2);
  \path[->] (1) edge [loop above] node {$g$} (1);
  \path[->] (3) edge [loop above] node {$f$} (3);
  
  \end{tikzpicture}
  \caption{$f$ and $g$ are nonpermutational, $fg$ is permutational}
  \label{fig-fgperm}
\end{figure}

Thus, the following question is nontrivial: how large a subsemigroup of $T_n$,
which consists only of nonpermutational transformations can be?
The obvious upper bound is $n^n$, the size of $T_n$.

{\bf As a first step we give an upper bound of $n^{n-2}$.}
Observe that the following are equivalent for a function $f:[n]\to [n]$:
\begin{enumerate}
\item[i)] $f$ is nonpermutational;
\item[ii)] the graph of $f$ is a rooted tree with edges directed towards the root,
  and with a loop edge attached on the root;
\item[iii)] $f^\omega$, the unique idempotent power of $f$ is a constant function.
\end{enumerate}
Here ``the graph of $f$'' is of course the directed graph $\Gamma_f$
on vertex set $[n]$ and with $(i,j)$ being an edge iff $if=j$.

Indeed, assume $f$ is nonpermutational. Let $X$ be the set of all nodes of $\Gamma_f$
lying on some closed path. (Since each node of the finite graph $\Gamma_f$ has outdegree $1$,
$X$ is nonempty.) Then $Xf=X$, thus $|X|=1$, i.e.
$f$ has a unique fixed point \marginpar{$\mathrm{Fix}(f)$}$\mathrm{Fix}(f)$ and apart from the loop edge on $\mathrm{Fix}(f)$,
$\Gamma_f$ is a directed acyclic graph (DAG) with each node distinct from $\mathrm{Fix}(f)$
having outdegree $1$ -- that is, a tree rooted at $\mathrm{Fix}(f)$, with edges directed towards
the root, showing i) $\to$ ii). Then $f^n$ is a constant function with value $\mathrm{Fix}(f)$,
showing ii) $\to$ iii); finally, if $Xf=X$ for some nonempty $X\subseteq[n]$,
then $Xf^\omega=X$, showing $|X|=1$ since the image of $f^\omega$ is a singleton.

Now from ii) we get that the members of $NP_n$ are exactly the rooted trees with edges directed
towards the root on which a loop edge is attached -- we call such a graph an inverted looped arborescence\footnote{For comparison, an arborescence is a rooted tree with its
  edges being directed \emph{away from} the root. Adding a loop edge to the root yields
  a ``looped arborescence''. However, we were unable to find a name in the literature for
  transposes of arborescences -- if there exists some, it would be better to use that name
  instead of ``nonpermutational''. Coining the term ``ecnecserobra'' is out of question. ``Ultimately constant'' would be also an option. We would be thankful for the referees to point out an existing term in the literature.},
or ILA for short. By Cayley's theorem on the number of labeled rooted trees over $n$ nodes,
the number of all ILAs (i.e., $|NP_n|$) is $n^{n-2}$, giving a slightly better upper bound.

{\bf To achieve an upper bound of $n!$}, suppose $S\subseteq NP_n$ is a subsemigroup of $T_n$.
For $i\in[n]$, let $S_i\subseteq S$ be the subsemigroup $\{f\in S:\mathrm{Fix}(f)=i\}$ of $S$.
Note that $S_i$ is indeed a semigroup: by assumption, $S$ is closed under composition and
consists of nonpermutational transformations only, moreover, if $i$ is the common (unique)
fixed point of $f$ and $g$, then it is also a fixed point of $fg$ as well, thus $S_i$
is closed under composition.

{\bf We give an upper bound of $(n-1)!$ for $|S_i|$}, $i\in[n]$, yielding $|S|\leq n!$.
To this end, let $\Gamma_i$ be the graph on vertex set $[n]$ with $(j,k)$ being an edge
iff $jf=k$ for some $f\in S_i$. Then, apart for the trivial case when $S_i=\emptyset$,
$(i,i)$ is an edge in $\Gamma_i$, moreover $i$ is
a sink (since $if=i$ for each $f\in S_i$). Note that in the case when $S_i=\emptyset$,
$|S_i|=0\leq(n-1)!$ clearly holds. Observe that $\Gamma_i$ is transitive, since
if $(j,k)$ and $(k,\ell)$ are edges of $\Gamma_i$, then $jf=k$ and $kg=\ell$ for some
$f,g\in S_i$; since $S_i$ is a semigroup, $fg$ is also in $S_i$ thus $(j,\ell)$ is
also an edge in $\Gamma_i$. Now assume some node $j\in[n]$ is in a nontrivial strongly connected component
(SCC) of $\Gamma_i$, i.e. $j$ lies on some closed path. By transitivity, $(j,j)$ is
an edge of $\Gamma_i$, thus $jf=j$ for some $f\in S_i$, thus $j=i$ since $i=\mathrm{Fix}(f)$
is the unique fixed point of $f\in S_i$. Hence by dropping the edge $(i,i)$ we get
a DAG again, thus $\Gamma_i$ (viewed as a relation) is a strict partial ordering of $[n]$
with largest element $i$. Let $\prec_i$ stand for this partial ordering, i.e.,
let $j\prec_i k$ if and only if $j\neq i$ and $jf=k$ for some $f\in S_i$.
Let us also fix some arbitrary total ordering $<_i$ extending $\prec_i$ and write
the members of $[n]$ in the order $a_{i,1}<_ia_{i,2}<_i\ldots<_ia_{i,n}=i$. Then for any
$f\in S_i$ and $1\leq j<n$ we have $a_{i,j}<_ia_{i,j}f$, and $a_{i,n}f=a_{i,n}$.
Since the number of functions $f:[n]\to [n]$ satisfying this constraint is $(n-1)!$
($a_{i,1}$ can get $(n-1)$ different possible values, $a_{i,2}$ can get $(n-2)$ etc.),
we immediately get $|S_i|\leq (n-1)!$ as well, yielding $|S|\leq n!$.

{\bf Via a somewhat cumbersome case analysis we can sharpen this upper bound to $n((n-1)!-(n-3)!)$.}
Without loss of generality assume that $S_n$ is (one of) the largest of the semigroups $S_i$
and that $<_n$ is the usual ordering $<$ of $[n]$ (we can achieve this by a suitable bijection).

\begin{lemma}
Suppose for each $i<j$ and $k<\ell$ with $i\neq k$
there exists a function $f\in S_n$ with $if=j$ and $kf=\ell$.

Then the following holds for each $i,j\in[n]$ and $f\in S_i$:
\begin{enumerate}
\item[i)] if $j<i$, then $j<jf$;
\item[ii)] if $i\leq j$, then $jf=i$.
\end{enumerate}
\label{lem-main}
\end{lemma}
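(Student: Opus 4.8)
The plan is to argue by contradiction throughout, exploiting the one defining feature of $S$ not yet used: since $S\subseteq NP_n$ is closed under composition, \emph{no} element of $S$ can have an invariant set of size $\ge 2$ (no product of members of $S$ may be permutational). The hypothesis is precisely a device for manufacturing such forbidden products: given any two ``increasing'' moves $a\mapsto a'$, $b\mapsto b'$ (with $a<a'$, $b<b'$, $a\ne b$) there is a single $g\in S_n$ realizing both at once, and every $g\in S_n$ satisfies $xg>x$ for $x<n$ together with $ng=n$. Composing a suspicious $f\in S_i$ with such a $g$ on either side and reading off the resulting two-element invariant set will be the uniform source of contradictions. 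I use freely that each $f\in S_i$ fixes $i$ and only $i$.

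First I would settle part~ii) in the extremal instance $j=n$, i.e.\ prove $nf=i$ for every $f\in S_i$ with $i<n$; this is the heart of the matter. Write $nf=w$, so $w\ne n$. If $w<i$, the moves $i\mapsto n$ and $w\mapsto i$ (legal, distinct sources) give $g\in S_n$ for which $gf$ sends $i\mapsto(ig)f=nf=w$ and $w\mapsto(wg)f=if=i$, so $\{i,w\}$ is a $2$-cycle of $gf\in S$ -- impossible. If $w>i$, I would examine the $f$-path $n=P_0\to P_1=w\to\cdots\to P_L=i$ to the root $i$ (well defined, as the graph of $f$ is an inverted looped arborescence). Its first step is a descent ($P_1=w<n$), and since the path must return from $w>i$ down to $i$ there is a further descent $P_t\to P_{t+1}$ with $t\ge1$; crucially $P_{t+1}\ne w$ since all path nodes are distinct. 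Feeding the two descents into the hypothesis (moves $w\mapsto n$ and $P_{t+1}\mapsto P_t$) yields $g\in S_n$ making both $n$ and $P_t$ fixed points of $fg\in S$, an invariant pair -- again impossible. Hence $nf=i$.

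With $nf=i$ in hand, the general case $i\le j$ of part~ii) is easy: the case $j=i$ is the fixed point, and a violation $jf=m\ne i$ with $j>i$ is killed by the $2$-cycle $\{i,m\}$ of $gf$, where $g$ sends $i\mapsto j$ (a preimage of $m$) and $m\mapsto n$; the second leg closes because $nf=i$ now supplies a preimage of $i$ lying \emph{above} $m$, and the boundary case $m=n$ uses $ng=n$ directly. Finally I deduce part~i) from part~ii): given $j<i$ with a putative violation $jf=m<j$, pick $g\in S_n$ with $mg=j$; then $fg$ fixes $j$, so $fg\in S_j$, and applying the established part~ii) to $fg\in S_j$ at the index $i\ge j$ forces $i(fg)=j$. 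But $i(fg)=(if)g=ig>i>j$ since $g$ strictly increases $i<n$ -- a contradiction.

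The main obstacle is the subcase $nf>i$ of the first step: the clean $2$-cycle gadget is unavailable there, since no preimage of $i$ above $m$ is yet known, so one must pass to the combinatorial analysis of the path from $n$ to the root and extract \emph{two} descents with distinct images, building a pair of fixed points in place of a transposition. Everything after $nf=i$ is then routine, modulo the small boundary cases ($i=n$, $m=n$, and the requirement $n\ge3$ for the hypothesis to furnish the single moves needed in parts~ii) and~i)).
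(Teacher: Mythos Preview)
Your argument is correct and uses the same engine as the paper's --- composing a suspicious $f\in S_i$ with a custom $g\in S_n$ to manufacture an element of $S$ with a two-element invariant set --- but the organisation differs. The paper first isolates the auxiliary claim ``$jf<j$ forces $jf=nf$'' (via two fixed points of $fg$) and then runs through $nf\ge i$, $nf=i$, part~i), part~ii) in that order, each later step reusing the claim; in particular i) falls out immediately once $nf=i$ is known, and ii) only needs a separate gadget for the subcase $jf>j$. You instead establish $nf=i$ directly (splitting $nf<i$ versus $nf>i$, the latter via essentially the same path argument), then dispatch ii) in one stroke for every $m\ne i$ via the $2$-cycle $\{i,m\}$ of $gf$, and finally bootstrap i) from ii) by observing that the product $fg$ lands in $S_j$ and invoking the already-proved ii) at the smaller index~$j$. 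The paper's route is slightly more modular (one claim feeds everything) and always uses genuine pairs of prescribed moves, so it never needs your ``single move plus dummy pair'' device and the attendant side condition $n\ge 3$; your route packages ii) more uniformly and the bootstrap for i) is a pleasant alternative to the paper's direct derivation. One small omission: in your part~i) you should also explicitly dismiss the case $jf=j$, which is immediate since $j<i$ would then be a second fixed point of $f$.
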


\begin{proof}
By assumption, the statements clearly hold for $i=n$. Let $i<n$ be arbitrary and $f\in S_i$
a transformation. Clearly $if=i$ by the definition of $S_i$.
Also, $nf<n$ since $i\neq n$ is the unique fixed point of $f$.

Suppose $jf<j$ for some $j$. Then $jf=nf$ has to hold: if $jf\neq nf$, then by assumption
$jfg=j$ and $nfg=n$ for some $g\in S_n$, thus both $j$ and $n$ are distinct fixed points
of $fg$, a contradiction. (See Figure~\ref{fig-points-one}.)
This implies in particular that $j\leq jf$ for each $j<nf$.

Also, if $nf<i$, then $nfg=i$ and $ig=n$ for some $g\in S_n$, in which case $fgfg$ has
two distinct fixed points $n$ and $i$, a contradiction. (See Figure~\ref{fig-points-one}.)
Thus $i\leq nf$.

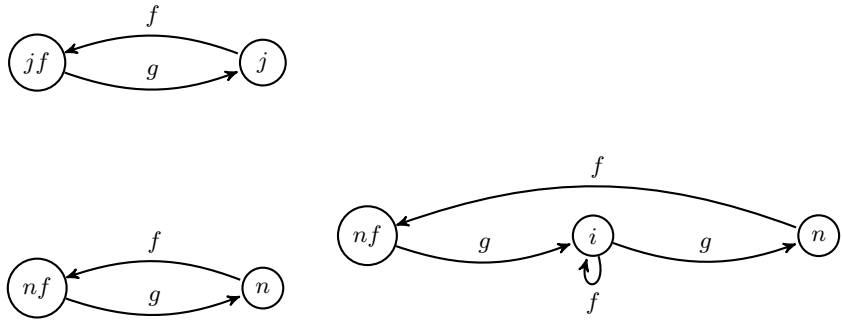
\begin{figure}[h!]\centering
\begin{tikzpicture}[->,>=stealth',shorten >=1pt,auto,node distance=3cm,
  thick,main node/.style={circle,draw,font=\sffamily\bfseries}]
  \node[main node] (1) {$jf$};
  \node[main node] (2) [right of=1] {$j$};
  \node[main node] (3) [below of=1] {$nf$};
  \node[main node] (4) [right of=3] {$n$};
  \path[->] (1) edge [bend right=20] node {$g$} (2);
  \path[<-] (1) edge [bend left=20] node {$f$} (2);
  \path[->] (3) edge [bend right=20] node {$g$} (4);
  \path[<-] (3) edge [bend left=20] node {$f$} (4);
  \end{tikzpicture}
\hfil
\begin{tikzpicture}[->,>=stealth',shorten >=1pt,auto,node distance=3cm,
  thick,main node/.style={circle,draw,font=\sffamily\bfseries}]
  \node[main node] (1) {$nf$};
  \node[main node] (2) [right of=1] {$i$};
  \node[main node] (3) [right of=2] {$n$};
  \path[->] (1) edge [bend right=20] node {$g$} (2);
  \path[->] (2) edge [bend right=20] node {$g$} (3);
  \path[->] (2) edge [loop below] node {$f$} (2);
  \path[<-] (1) edge [bend left=20] node {$f$} (3);
  \end{tikzpicture}
  \caption{Left: if $jf<j$, $jf\neq nf$, then $fg$ has two fixed points. Right: If $nf<i$, then $fgfg$ has two fixed points}
  \label{fig-points-one}
\end{figure}
Assume $i<nf$. Then (since $nf^n=i<nf$) there is some $k>0$ such that $nf^{k+1}<nf$.
If $k$ is chosen to be the smallest possible such $k$, then $nf\leq nf^k$, yielding
$(nf^k)f<nf\leq nf^k$, a contradiction (by $(nf^k)f<nf^k$, it should hold that
$(nf^k)f=nf$, see Figure~\ref{fig-points-two}). Hence $i=nf$ is the unique fixed point of $f$ and for each $j<i$,
$j<jf$ indeed has to hold, showing i).
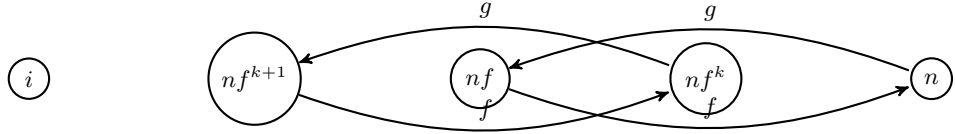
\begin{figure}[h!]
\begin{tikzpicture}[->,>=stealth',shorten >=1pt,auto,node distance=3cm,
  thick,main node/.style={circle,draw,font=\sffamily\bfseries}]
  \node[main node] (1) {$i$};
  \node[main node] (2) [right of=1] {$nf^{k+1}$};
  \node[main node] (3) [right of=2] {$nf$};
  \node[main node] (4) [right of=3] {$nf^{k}$};
  \node[main node] (5) [right of=4] {$n$};
  
  \path[->] (2) edge [bend right=20] node {$f$} (4);
  \path[<-] (2) edge [bend left=20] node {$g$} (4);
  \path[->] (3) edge [bend right=20] node {$f$} (5);
  \path[<-] (3) edge [bend left=20] node {$g$} (5);
  \end{tikzpicture}
  \caption{If $i<nf$, then $fg$ has two distinct fixed points}
  \label{fig-points-two}
\end{figure}

Finally, assume $i<j<jf$. Then $ig=j$ and $jfg=n$ for some $g\in S_n$ (if $jf=n$, then this
latter case always gets satisfied, otherwise it's by assumption on $S_n$), and
$fgfg$ has two distinct fixed points $j$ and $n$.
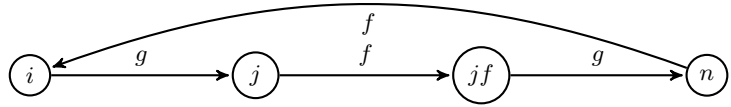
\begin{figure}[h!]
\centering
\begin{tikzpicture}[->,>=stealth',shorten >=1pt,auto,node distance=3cm,
  thick,main node/.style={circle,draw,font=\sffamily\bfseries}]
  \node[main node] (1) {$i$};
  \node[main node] (2) [right of=1] {$j$};
  \node[main node] (3) [right of=2] {$jf$};
  \node[main node] (4) [right of=3] {$n$};
  
  \path[->] (1) edge node {$g$} (2);
  \path[->] (2) edge node {$f$} (3);
  \path[->] (3) edge node {$g$} (4);
  \path[->] (4) edge [bend right=20] node {$f$} (1);
  \end{tikzpicture}
  \caption{If $i<j<jf$, then $fgfg$ has two distinct fixed points}
\end{figure}
Thus we have indeed shown that $nf=i$ is the unique fixed point of $f$, $j<jf$ for each $i<j$
and $jf=i$ for each $i\leq j\leq n$.
\end{proof}
Lemma~\ref{lem-main} has the following corollary:
\begin{theorem}
\label{thm-main}
The cardinality of any subsemigroup $S$ of $T_n$ consisting only of nonpermutational
transformations is at most $n((n-1)!-(n-3)!)$.
\end{theorem}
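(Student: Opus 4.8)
The plan is to combine Lemma~\ref{lem-main} with the decomposition $|S|=\sum_{i=1}^{n}|S_i|$ and the normalization already in force, namely that $S_n$ is a largest class and $<_n$ is the usual order, so that every $f\in S_n$ satisfies $jf>j$ for $j<n$ and $nf=n$ (whence $|S_n|\le(n-1)!$). Everything hinges on a dichotomy: either the hypothesis of Lemma~\ref{lem-main} fails for $S_n$, in which case I bound $S_n$ itself and propagate through $|S|\le n|S_n|$; or it holds, in which case Lemma~\ref{lem-main} makes all the other classes rigid and I bound the sum directly.

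Suppose first that the hypothesis of Lemma~\ref{lem-main} fails. Then there are indices $i<j$ and $k<\ell$ with $i\ne k$ such that no $f\in S_n$ has both $if=j$ and $kf=\ell$. Hence every function obeying $jf>j$ for $j<n$, $nf=n$, and these two additional constraints is absent from $S_n$. I would count them: among such functions each position $p\in\{1,\dots,n-1\}$ independently takes one of $n-p$ values, so fixing the images of the two distinct positions $i,k$ (both in $\{1,\dots,n-1\}$, since $i<j\le n$ and $k<\ell\le n$) deletes exactly the factors $(n-i)$ and $(n-k)$ from $\prod_{p=1}^{n-1}(n-p)=(n-1)!$, leaving $\frac{(n-1)!}{(n-i)(n-k)}$ forbidden functions. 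As $n-i$ and $n-k$ are distinct elements of $\{1,\dots,n-1\}$, their product is at most $(n-1)(n-2)$, so at least $(n-3)!$ functions are excluded and $|S_n|\le(n-1)!-(n-3)!$. Since $S_n$ is a largest class, $|S|\le n|S_n|\le n\bigl((n-1)!-(n-3)!\bigr)$.

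Now suppose the hypothesis holds, so Lemma~\ref{lem-main} applies to every class. For $i<n$ it forces each $f\in S_i$ to collapse $\{i,i+1,\dots,n\}$ onto $i$ and to satisfy $j<jf$ on $\{1,\dots,i-1\}$; such an $f$ is then determined by its values on $\{1,\dots,i-1\}$, giving $|S_i|\le\prod_{p=1}^{i-1}(n-p)=\frac{(n-1)!}{(n-i)!}$. Together with $|S_n|\le(n-1)!$ this yields
\[|S|\le(n-1)!+\sum_{i=1}^{n-1}\frac{(n-1)!}{(n-i)!}=(n-1)!\sum_{m=0}^{n-1}\frac{1}{m!}<e\,(n-1)!.\]
It then remains to check $e\,(n-1)!\le n\bigl((n-1)!-(n-3)!\bigr)=n!\bigl(1-\frac{1}{(n-1)(n-2)}\bigr)$; for $n\ge4$ the right-hand side is at least $\frac{5}{6}n!>e\,(n-1)!$, so the target bound holds, and the remaining small values of $n$ are inspected directly.

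I expect the crux to be the counting in the failing case: identifying precisely the family of forbidden functions and proving the sharp estimate $(n-i)(n-k)\le(n-1)(n-2)$, which is exactly what converts an arbitrary missing pair of constraints into the clean deficit $(n-3)!$. By contrast, once Lemma~\ref{lem-main} is invoked the bookkeeping in the second case is routine; the only delicate point there is that the crude estimate $e\,(n-1)!$ overtakes the target only for $n$ sufficiently large, so the smallest cases must be handled on their own.
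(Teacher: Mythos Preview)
Your argument follows the paper's proof essentially verbatim: the same dichotomy on whether the hypothesis of Lemma~\ref{lem-main} holds for $S_n$, the same count $\frac{(n-1)!}{(n-i)(n-k)}\ge(n-3)!$ of excluded functions in the failing case (the paper even has a typo here, writing $(n-j)$ where you correctly have $(n-k)$), and the same bound $\sum_i\frac{(n-1)!}{(n-i)!}=\lfloor e(n-1)!\rfloor$ in the other case. You are in fact slightly more careful than the paper in explicitly checking that $\lfloor e(n-1)!\rfloor\le n\bigl((n-1)!-(n-3)!\bigr)$, a comparison the paper leaves entirely implicit.
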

\begin{proof}
As before, let $S_i$ stand for $\{f\in S:\mathrm{Fix}(f)=i\}$ and without loss of
generality we assume that amongst them $S_n$ is one of the largest one, moreover
$<_n$ coincides with $<$.

If for each $i<j$ and $i'<j'$ with $i\neq i'$ there is some $f\in S_n$ with
$if=j$ and $i'f=j'$, then by Lemma~\ref{lem-main} $S_i$ can consist of at most
$(n-1)(n-2)\ldots(n-i-1)=\frac{(n-1)!}{(n-i)!}$ elements (we have to choose
for each $j<i$ a larger integer and that's all since the other elements have
to be mapped to $i$). Also $|S_n|\leq (n-1)!$ as well. Summing up we get an
upper bound for these semigroups
\[
\sum_{i=1}^n\frac{(n-1)!}{(n-i)!}\ =\ (n-1)!\sum\limits_{j=0}^{n-1}\frac{1}{j!}\ =\ \lfloor e(n-1)!\rfloor,
\]
which comes from the facts that $e=\sum_{j=0}^\infty\frac{1}{j!}$ and $(n-1)!\sum_{j=n}^\infty\frac{1}{j!}< 1$.

For the other case, suppose there exists an $i<j$ and an $i'<j'$ with
$i\neq i'$ such that $if=j$ and $i'f=j'$ do not both hold for any $f\in S_n$.
Still, $i<if$ for each $i<n$ and $nf=n$, by definition of $S_n$ and the assumption
$<=<_n$. The number of such functions satisfying both $if=j$ and $i'f=j'$ is
$\frac{(n-1)!}{(n-i)(n-j)}\geq (n-3)!$, hence the size of $S_n$ is upper-bounded
by $(n-1)!-(n-3)!$. Since $S_n$ is the largest amongst the $S_i$'s and $S$ is the
disjoint union of them we get the claimed upper bound $n((n-1)!-(n-3)!)$.
\end{proof}
We note that the construction for the first case, yielding the upper bound
$\lfloor e(n-1)!\rfloor$ indeed constructs a semigroup $B$ which is exactly the semigroup
from~\cite{brzozo} conjectured there to be a candidate for the maximal-size such subsemigroup.

Our proof can be viewed as a support for this conjecture and can be reformalized as follows:
if there exists some $i$ such that many transformations share this fixed point $i$, then
the size of $S$ is upper-bounded by $\lfloor e(n-1)!\rfloor$ and $S$ is isomorphic to a
subsemigroup of $B$. The question is, whether one can construct a larger semigroup by putting
not too many functions sharing a common fixed point. We also conjecture that $B$ 
is a good candidate for a maximal-size subsemigroup of $T_n$ consisting of nonpermutational
transformations only.

\section{Application to syntactic complexity}
A language $L$ is
\emph{definite} if there exists a constant $k\geq 0$ such that for any $x\in\Sigma^*$, $y\in \Sigma^k$ we have $xy\in L\Leftrightarrow y\in L$ and
is \emph{generalized definite} if there exists a constant $k\geq 0$ such that for any $x_1,x_2\in\Sigma^k$ and $y\in\Sigma^*$ we have $x_1yx_2\in L\Leftrightarrow x_1x_2\in L$.

These are both subclasses of the star-free languages, i.e. can be built from the singletons with repeated use of the concatenation, finite union and complementation operations. It is known that the following decision problem is complete for $\mathbf{PSPACE}$: given a regular language $L$ with its minimal automaton, is $L$ star-free? In contrast, the question for these subclasses above are tractable.

Minimal automata of these languages possess a characterization in terms of \emph{forbidden patterns}.
In our setting, a pattern is an edge-labelled, directed graph $P=(V,E,\ell)$, where $V$ is the set of vertices, $E\subseteq V^2$ is the set of edges,
and $\ell:E\to \mathcal{X}$ is a labelling function which assigns to each edge a variable.
An automaton $\mathbb{A}=(Q,\Sigma,\delta,q_0,F)$ \emph{admits a pattern $P=(V,E,\ell)$}\marginpar{admitting/avoiding a pattern} if there exists an \emph{injective} mapping $f:V\to Q$
and a map $h:\mathcal{X}\to \Sigma^+$ such that for each $(u,v)\in E$ labelled $x$ we have $f(u)\cdot h(x)=f(v)$.
Otherwise $\mathbb{A}$ \emph{avoids} $P$.
  
As an example, consider the pattern $P_d$ on Figure~\ref{fig-patterns}.

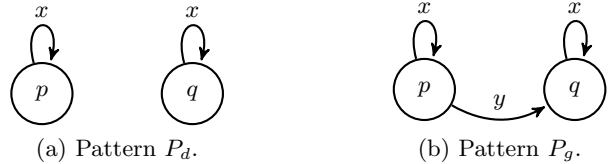
\begin{figure}[h!]
\centering
\subfloat[][Pattern $P_d$.]{
\begin{tikzpicture}[shorten >=1pt,node distance=2cm,>=stealth',thick]
\node[state] (1) {$p$};
\node[state] (2) [right of=1] {$q$};
\draw [->] (1) to[loop above] node[auto] {$x$} (1);
\draw [->] (2) to[loop above] node[auto] {$x$} (2);
\end{tikzpicture}
}
\hfil
\centering
\subfloat[][Pattern $P_g$.]{
\begin{tikzpicture}[shorten >=1pt,node distance=2cm,>=stealth',thick]
\node[state] (1) {$p$};
\node[state] (2) [right of=1] {$q$};
\draw [->] (1) to[loop above] node[auto] {$x$} (1);
\draw [->] (2) to[loop above] node[auto] {$x$} (2);
\draw [->] (1) to[bend right] node[auto] {$y$} (2);
\end{tikzpicture}
}
\caption{Patterns for definite and generalized definite languages.}
\label{fig-patterns}
\end{figure}

A reduced automaton avoids $P_d$ if and only if it recognizes a definite language.
Indeed, a language $L$ is definite iff its syntactic semigroup satisfies the identity
$yx^\omega=x^\omega$. Now assume $L(\mathbb{A})$ admits $P_d$ with $px=p$ and $qx=q$ with $p\neq q$
and $x\in\Sigma^+$. If $q_0x^\omega=p$, then $q_0x^\omega\neq q_0yx^\omega$ for a (nonempty) word
$y$ with $q_0y=q$. If $q_0x^\omega\neq p$, then $q_0x^\omega\neq q_0yx^\omega$ for a (nonempty) $y$
with $q_0y=p$, thus the identity is not satisfied. For the other directon, if
the transition semigroup of an automaton $\mathbb{A}$ does not satisfy $x^\omega=yx^\omega$,
then $p_0x_0^\omega\neq p_0yx_0^\omega$ for some $p_0,x_0$ and $y$; choosing $p=p_0x^\omega$,
$q=p_0y$ and $x_0=x^\omega$
witnesses admittance of $P_d$. (For a more detailed discussion see e.g.~\cite{brzozo}.)

Observe that avoiding $P_d$ is equivalent to state that each nonempty word induces a transformation with at most one fixed point,
which is further equivalent to state that each nonempty word induces a non-permutational transformation:
for each nonempty $u$, the word $u^{|Q|!}$ fixes each state belonging to a nontrivial component of
the graph of $u$, hence $u$ also can have only one state in a nontrivial component, i.e. $u$
induces a nonpermutational transformation. (Again, see~\cite{brzozo} for a different formulation.\footnote{Since -- up to our knowledge -- ~\cite{brzozo} has not been published yet in a peer-reviewed journal or conference proceedings, we include a proof of this fact. Nevertheless, we do not claim this result to be ours, by any means.}.)

Thus Theorem~\ref{thm-main} has the following byproduct:
\begin{corollary}
The syntactic complexity of the definite languages is at most $n((n-1)!-(n-3)!)$.
\end{corollary}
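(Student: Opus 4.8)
The plan is to apply Theorem~\ref{thm-main} directly, once the right semigroup has been identified. Let $L$ be a definite language with $\mathrm{stc}(L)=m\leq n$, and let $\mathbb{A}_L$ be its reduced minimal automaton, which by definition has exactly $m$ states. Recall that $\mathrm{syc}(L)$ is the cardinality of the transformation semigroup $\mathcal{T}(\mathbb{A}_L)$, and that $\mathcal{T}(\mathbb{A}_L)=\{u^{\mathbb{A}_L}:u\in\Sigma^+\}$ is closed under composition, hence a subsemigroup of $T_m$.

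The key step is to observe that, since $L$ is definite, every element of $\mathcal{T}(\mathbb{A}_L)$ is nonpermutational. This is precisely the content of the discussion preceding the corollary: a reduced automaton avoids the pattern $P_d$ if and only if it recognizes a definite language, and avoiding $P_d$ is in turn equivalent to each nonempty word inducing a nonpermutational transformation. Consequently $\mathcal{T}(\mathbb{A}_L)$ is a subsemigroup of $T_m$ contained in $NP_m$, and Theorem~\ref{thm-main} yields $\mathrm{syc}(L)=|\mathcal{T}(\mathbb{A}_L)|\leq m((m-1)!-(m-3)!)$.

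It then remains to pass from this $m$-dependent bound to the desired bound for $n$. Writing $g(m)=m((m-1)!-(m-3)!)=m!-m(m-3)!$, I would verify that $g$ is nondecreasing, so that $g(m)\leq g(n)$ for every $m\leq n$. Since $g(m+1)-g(m)=m\cdot m!-\bigl((m+1)(m-2)!-m(m-3)!\bigr)$ and the positive term $m\cdot m!$ dwarfs the subtracted quantity for all $m$ in range, the difference stays positive; the handful of small values of $m$ can be checked by hand. Taking the maximum of $\mathrm{syc}(L)$ over all definite $L$ with $\mathrm{stc}(L)\leq n$ then gives exactly $n((n-1)!-(n-3)!)$.

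This is essentially a one-line corollary of Theorem~\ref{thm-main} together with the already-established characterization of definiteness via nonpermutational transformations, so I do not expect a genuine obstacle. The only point demanding a little care is the monotonicity of $g$, which guarantees that a definite language whose minimal automaton has strictly fewer than $n$ states cannot, through its own smaller $m$-bound, overshoot the target bound for $n$.
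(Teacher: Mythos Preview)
Your proposal is correct and follows the same approach as the paper: the corollary is stated there simply as a ``byproduct'' of Theorem~\ref{thm-main} together with the preceding observation that avoiding $P_d$ means every induced transformation is nonpermutational, with no further proof given. Your write-up is in fact more careful than the paper's, since you make explicit the monotonicity step needed to pass from the bound at the actual state complexity $m$ to the bound at $n$ (required by the definition $f(n)=\max\{\mathrm{syc}(L):\mathrm{stc}(L)\leq n\}$), a point the paper leaves implicit.
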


\subsection{The case of generalized definite languages}

In this subsection we show that the syntactic complexity of definite and generalized definite
languages coincide. To this end we study the structure of the minimal automata of the members
of the latter class. In the process we give a (to our knowledge) new (but not too surprising) characterization of the minimal automata of generalized definite languages,
leading to an $\mathbf{NL}$-completeness result of the corresponding decision problem,
as well as a low-degree polynomial deterministic algorithm.

Our first observation is the following characterization:

\begin{theorem}
\label{thm-pattern}
The following are equivalent for a reduced automaton $\mathbb{A}$:
\begin{enumerate}
\item[i)] $\mathbb{A}$ avoids $P_g$.
\item[ii)] Each nontrivial component of $\mathbb{A}$ is a sink, and for each nonempty word $u$ and sink $C$ of $\mathbb{A}$,
  the transformation $u|_C:C\to C$ is non-permutational.
\item[iii)] $\mathbb{A}$ recognizes a generalized definite language.
\end{enumerate}
\end{theorem}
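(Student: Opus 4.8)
The plan is to prove the three equivalences by establishing a cycle of implications, handling the most substantive content in the step from ii) to iii) and from iii) back to i). Throughout I would lean on the algebraic characterization already cited: a language is generalized definite if and only if its syntactic semigroup satisfies $x^\omega y x^\omega = x^\omega$, and I would exploit the component structure $\Gamma(\mathbb{A})$ together with the fact that the minimal automaton is reduced (hence connected, with every pair of distinct states separable).

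\textbf{From i) to ii).} First I would show that avoiding $P_g$ forces each nontrivial component to be a sink. Suppose some nontrivial component $C$ is not a sink, so there is a state $p \in C$, a letter (or word) leaving $C$, yet $p$ lies on a nonempty loop within $C$; I would pick a word $x \in \Sigma^+$ with $px = p$ (obtainable since $p$ is in a nontrivial component, by iterating a word that cycles through $C$) and a word $y$ that carries $p$ out of $C$ to some state $q$ in a strictly later component. Then $qx = q$ need not hold automatically, so the delicate point is to arrange the self-loop at $q$: I would replace $x$ by a suitable power $x^{|Q|!}$ so that $x^{|Q|!}$ fixes every state lying in a nontrivial component of its own graph, and then argue that the target $q$ can be chosen inside a \emph{sink} reached from $p$, where such a fixed point exists. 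This yields the pattern $P_g$ (two $x$-loops at distinct $p, q$ plus a $y$-edge from $p$ to $q$), a contradiction. The non-permutationality of $u|_C$ on each sink $C$ then follows exactly as in the $P_d$ discussion in the excerpt: two distinct fixed points of $u|_C$ inside the same sink give two $x$-loops, and since within a sink any state is reachable from any other we can supply the connecting $y$-edge, again producing $P_g$.

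\textbf{From ii) to iii).} This is where I expect the main obstacle. The plan is to verify the identity $x^\omega y x^\omega = x^\omega$ at the level of states, i.e. $q x^\omega y x^\omega = q x^\omega$ for every state $q$ and all words $x \in \Sigma^+$, $y \in \Sigma^*$. The key structural fact to extract from ii) is that after applying a sufficiently high power $x^\omega$, every state is driven into a sink: because non-sink components are all trivial (a nontrivial component would have to be a sink), any long enough word pushes a state monotonically down the component DAG until it lands in a sink, and once inside a sink it stays there. So $q x^\omega$ lies in some sink $C$, and inside $C$ the transformation $x^\omega|_C$ is idempotent and, by the non-permutational hypothesis, constant on $C$ with value the unique fixed point $\mathrm{Fix}(x|_C)$. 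Applying $y$ then $x^\omega$ again: $y$ may leave $C$, but a sink satisfies $C\Sigma \subseteq C$, so $Cy \subseteq C$ and the state stays in $C$; re-applying $x^\omega$ collapses back to the same constant $\mathrm{Fix}(x|_C)$. Hence $q x^\omega y x^\omega = q x^\omega$, giving the identity and therefore generalized definiteness. The care needed is in choosing a single exponent (a multiple of $|Q|!$) that simultaneously realizes $x^\omega$ as idempotent on every relevant sink and guarantees the descent into sinks has completed.

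\textbf{From iii) to i).} Finally I would argue the contrapositive: if $\mathbb{A}$ admits $P_g$, then the identity fails. Admitting $P_g$ gives distinct states $p \neq q$ with $px = p$, $qx = q$, and $py = q$ for some $x \in \Sigma^+$, $y \in \Sigma^+$. Since $\mathbb{A}$ is reduced it is connected, so $p$ is reachable from $q_0$ by some word $w$, i.e. $q_0 w = p$. Then $q_0 w x^\omega = p$ (as $px=p$), whereas $q_0 w x^\omega y x^\omega = p y x^\omega = q x^\omega = q$ (using $qx=q$); since $p \neq q$, the transformations $x^\omega$ and $x^\omega y x^\omega$ differ on the reachable state $p$, so in the syntactic (transition) semigroup of the \emph{reduced} automaton they are distinct elements, violating $x^\omega y x^\omega = x^\omega$. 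This contradicts iii), completing the cycle $\text{i)} \Rightarrow \text{ii)} \Rightarrow \text{iii)} \Rightarrow \text{i)}$ and hence the theorem.
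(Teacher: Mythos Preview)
Your proof is correct and follows the same overall cycle i)$\Rightarrow$ii)$\Rightarrow$iii)$\Rightarrow$i) as the paper, but two of the steps are organized differently in ways worth noting.

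In i)$\Rightarrow$ii), the paper reverses your order: it first shows that $u|_C$ is nonpermutational on every sink $C$ (via the same $P_g$ argument you sketch), and only \emph{then} handles a nontrivial non-sink component $C$ by picking a sink $C'$ reachable from $p\in C$ and using the already-established nonpermutationality to conclude that $x=u^{|C'|}$ is \emph{constant} on $C'$, which immediately hands you the fixed point $q$. Your route instead appeals to the general fact that $x^{|Q|!}$ is idempotent and hence has fixed points in any sink; this works, but the paper's ordering makes the second sub-claim a one-line consequence of the first.

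In ii)$\Rightarrow$iii), the paper does not verify the semigroup identity $x^\omega y x^\omega = x^\omega$ as you do. Instead it argues directly from the combinatorial definition of generalized definite: since every non-sink component is trivial, any word of length $\geq |Q|$ drives $q_0$ into a sink $C$, and each sink, viewed as a subautomaton, has a nonpermutational transition semigroup and hence recognizes a $k_C$-definite language; taking $n=\max(|Q|,\max_C k_C)$ shows the language is $n$-generalized definite. Your identity-based argument is equally valid and arguably tidier once one observes that $qx^\omega$ is fixed by $x^\omega$ and therefore lies in a nontrivial component, hence a sink. The paper's version has the small advantage of yielding an explicit bound on the definiteness constant, while yours stays entirely inside the algebraic framework already invoked for iii)$\Rightarrow$i).
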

\begin{proof}
Let $\mathbb{A}=(Q,\Sigma,\delta,q_0,F)$ be a reduced automaton.

{\bf i)$\to$ii).} Suppose $\mathbb{A}$ avoids $P_g$.
Suppose that $u|_C$ is permutational for some sink $C$ and word $u\in\Sigma^+$.
Then there exists a set $D\subseteq C$ with $|D|>1$ such that $u$ induces a permutation
on $D$. Then, $x=u^{|D|!}$ is the identity on $D$. Choosing arbitrary distinct states $p,q\in D$
and a word $y$ with $py=q$ (such $y$ exists since $p$ and $q$ are in the same component of $\mathbb{A}$),
we get that $\mathbb{A}$ admits $P_g$ by the $(p,q,x,y)$ defined above, a contradiction.
Hence, $u|_C$ is non-permutational for each sink $C$ and word $u\in\Sigma^+$.

Now assume there exists a nontrivial component $C$ which is not a sink.
Then, $pu=p$ for some $p\in C$ and word $u\in\Sigma^+$. Since $C$ is not a sink, there exists
a sink $C'\neq C$ reachable from $p$ (i.e. all of its members are reachable from $p$).
Since $u$ induces a non-permutational transformation on $C'$, $x=u^{|C'|}$ induces a constant function
on $C'$. Let $q$ be the unique state in the image of $x|_{C'}$. Since $C'$ is reachable from $p$,
there exists some nonempty word $y$ such that $py=q$. Hence, $px=p$, $qx=q$, $py=q$ and $\mathbb{A}$
admits $P_g$, a contradiction.

{\bf ii)$\to$iii).} Suppose the condition of ii) holds.
We show that $L=L(\mathbb{A})$ is generalized definite. 
By the assumption, $q_0u$ belongs to a sink for any $u$ with $|u|\geq |Q|$.
On the other side, viewing a sink $C$ as a (reduced) automaton $\mathbb{C}=(C,\Sigma,\delta|_C,p,F\cap C)$
with $p$ being an arbitrary state of $C$ we get that the transition semigroup of $\mathbb{C}$
consists of nonpermutational transformations only, i.e. $L(\mathbb{C})$ is $k$-definite for some
$k=k_C$. Hence choosing $n$ to be the maximum of $|Q|$ and the values $k_C$ with $C$ being a sink
we get that $L$ is $n$-generalized definite (since the length-$n$ prefix of $u$ determines the sink $C$
to which $q_0u$ belongs and the length-$n$ suffix of $u$, once we know $C$, determines the unique
state in $Cu$).

{\bf iii)$\to$i).} Suppose $L(\mathbb{A})$ is generalized definite.
Then its syntactic semigroup satisfies $x^\omega yx^\omega=x^\omega$ (see e.g.~\cite{handbook}).

Now assume $\mathbb{A}_L$ admits $P_g$ with $px=p$, $qx=q$ and $py=q$ for
the nonempty words $x,y$ and different states $p,q$. Then $p x^\omega = p$ and $px^\omega yx^\omega=q$,
and the identity is not satisfied, thus $L$ is not generalized definite.
\end{proof}

\subsubsection{Complexity issues}

We now take a slight excursion.

Using the characterization given in Theorem~\ref{thm-pattern}, we study the complexity of the following decision problem $\textsc{GenDef}$: given a finite automaton $\mathbb{A}$, is $L(\mathbb{A})$ a generalized definite language?

\begin{theorem}
Problem $\textsc{GenDef}$ is $\mathbf{NL}$-complete.
\end{theorem}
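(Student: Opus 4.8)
The plan is to establish membership in $\mathbf{NL}$ and $\mathbf{NL}$-hardness separately, using the characterization from Theorem~\ref{thm-pattern}, specifically the equivalence of part i): a reduced automaton recognizes a generalized definite language if and only if it avoids the pattern $P_g$. Since $\mathbf{NL}$ is closed under complement by the Immerman--Szelepcs\'enyi theorem, it suffices to show that the complementary problem---detecting admittance of $P_g$---lies in $\mathbf{NL}$, because this is more naturally a reachability-style existential question. One subtlety to address first is that Theorem~\ref{thm-pattern} concerns reduced automata, whereas $\textsc{GenDef}$ takes an arbitrary automaton as input; I would note that minimization is computable in logspace (or alternatively argue that admittance of $P_g$ can be checked on the given automaton after restricting to reachable states and identifying indistinguishable states, both of which are $\mathbf{NL}$ operations).

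For the upper bound, I would design a nondeterministic logspace procedure that guesses the witnesses for $P_g$. Admittance of $P_g$ means there exist distinct states $p,q$ and nonempty words $x,y$ with $px=p$, $qx=q$, and $py=q$. The key observation is that we need not store the words $x,y$ explicitly---only the current states reached. The machine guesses $p$ and $q$, then verifies the existence of a common nonempty word $x$ fixing both simultaneously by treating this as a reachability problem in the product automaton on pairs of states: starting from $(p,q)$, nondeterministically follow transitions letter-by-letter and check whether $(p,q)$ is reachable back to itself via a nonempty path (equivalently, $(p,q)$ lies on a nontrivial cycle in the product graph). Separately, reachability of $q$ from $p$ via a nonempty word $y$ is plain directed graph reachability. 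Each such check uses only a logarithmic number of pointers into states of $\mathbb{A}$, and since $\mathbf{NL}$-reachability is the canonical $\mathbf{NL}$ problem, the whole procedure runs in nondeterministic logspace.

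For $\mathbf{NL}$-hardness, I would reduce from the standard $\mathbf{NL}$-complete directed $s$-$t$ reachability problem (or its complement, again invoking closure under complement). Given a directed graph $G$ with vertices $s,t$, the plan is to construct in logspace an automaton whose minimal automaton admits $P_g$ exactly when $t$ is reachable from $s$ in $G$. The construction would encode the edges of $G$ as transitions on one letter and add gadgets---self-loops on the relevant states under a second letter, and accepting/rejecting assignments chosen so that distinct reachable vertices remain distinguishable---so that the pattern $P_g$ appears precisely when the reachability relation holds. The main obstacle I anticipate is the reduction direction: ensuring that the constructed automaton is (or reduces in logspace to) a reduced automaton on which the presence of $P_g$ faithfully mirrors $s$-$t$ reachability, while simultaneously guaranteeing the $x$-loops and the $y$-path of $P_g$ are simultaneously realizable. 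Getting the distinguishability and the loop structure to cooperate---so that minimization does not collapse the very states witnessing the pattern---is the delicate part, and I would handle it by attaching small distinguishing gadgets (e.g. distinct acceptance behavior under a fresh letter) to each vertex-state to keep them pairwise separated.
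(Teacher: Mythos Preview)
Your membership argument is essentially the paper's: minimize first in $\mathbf{NL}$ (the paper cites \cite{cho-huynh} for this), then nondeterministically guess $p\neq q$, verify $p\to q$ by letter-by-letter reachability, verify a common fixing word by walking the pair $(p,q)$ in parallel, and finish via $\mathbf{NL}=\mathrm{co}\mathbf{NL}$.

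The hardness sketch, however, has a gap, and it is not the one you anticipate. First a small slip: a graph vertex may have several out-edges, so ``encode the edges of $G$ as transitions on one letter'' is not possible in a DFA; you need an alphabet indexing out-neighbours, as the paper does. More seriously, placing explicit self-loops on $s$ and $t$ under a fresh letter $b$ does not confine $P_g$-admittance to $s$--$t$ reachability. Every automaton has a sink component $C$, and some state of $C$ is fixed by $b^{|C|!}$; since $s$ is also fixed by $b^{|C|!}$, the pattern $P_g$ is admitted with $p=s$ and $q$ in the sink as soon as $s$ can reach $C$ --- which it will, once you complete the transition function by routing undefined transitions somewhere. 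This happens regardless of whether $t$ is reachable from $s$, so the reduction collapses. Your anticipated obstacle was distinguishability under minimization; the actual obstruction is structural: the sink hands you a second $x$-fixed state for free.

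The paper's reduction avoids this entirely. It starts from the acyclic problem \textsc{DAG-Reach}, adds a single fresh sink state $n{+}1$, and instead of self-loops inserts one back-edge from the target $n$ to the source $1$. In a DAG the only way a nontrivial component other than $\{n{+}1\}$ can arise is if that back-edge closes a cycle, i.e.\ exactly when $n$ is reachable from $1$; by Theorem~\ref{thm-pattern} this is equivalent to admitting $P_g$. The delicate step is then not a distinguishing gadget but a separate argument that passing to the reduced form of this automaton preserves the answer. If you try to repair the self-loop idea you are forced to ensure that $s$ can reach a sink only through $t$, which, once unwound, is essentially the paper's back-edge construction.
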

\begin{proof}
First we show that $\textsc{GenDef}$ belongs to $\mathbf{NL}$.
By~\cite{cho-huynh}, minimizing a DFA can be done in nondeterministic logspace.
Thus we can assume that the input is already minimized, since the class of
(nondeterministic) logspace computable functions is closed under composition.

Consider the following algorithm:
\begin{enumerate}
\item Guess two different states $p$ and $q$.
\item Let $s:=p$.
\item Guess a letter $a\in \Sigma$. Let $s:=sa$.
\item If $s=q$, proceed to Step 5. Otherwise go back to Step 3.
\item Let $p':=p$ and $q':=q$.
\item Guess a letter $a\in\Sigma$. Let $p':=p'a$ and $q'=q'a$.
\item If $p=p'$ and $q=q'$, accept the input. Otherwise go back to Step 6.
\end{enumerate}
The above algorithm checks whether $\mathbb{A}$ admits $P_g$: first it guesses $p\neq q$,
then in Steps 2--4 it checks whether $q$ is accessible from $p$,
and if so, then in Steps 5--7 it checks whether there exists a word $x\in\Sigma^+$ with $px=p$ and $qx=q$.
Thus it decides\footnote{Note that in this form, the algorithm can enter an infinite loop which fits
into the definition of nondeterministic log\emph{space}.
Introducing a counter and allowing at most $n$ steps in the first cycle and at most $n^2$ in the second
we get a nondeterministic algorithm using logspace and polytime, as usual.} the complement of $\textsc{GenDef}$, in nondeterministic logspace; since $\mathbf{NL}=\mathrm{co}\mathbf{NL}$,
we get that $\textsc{GenDef}\in\mathbf{NL}$ as well.

For $\mathbf{NL}$-completeness we recall from~\cite{jones-lien-laaser} that the reachability problem for DAGs ($\textsc{DAG-Reach}$)
is complete for $\mathbf{NL}$:
given a directed acyclic graph $G=(V,E)$ on $V=[n]$ with $(i,j)\in E$ only if $i<j$,
is $n$ accessible from $1$?
We give a logspace reduction from $\textsc{DAG-Reach}$ to $\textsc{GenDef}$ as follows.
Let $G=([n],E)$ be an instance of $\textsc{DAG-Reach}$.
For a vertex $i\in[n]$, let $N(i)=\{j:(i,j)\in E\}$ stand for the set of its neighbours and
let $d(i)=|N(i)|<n$ denote the outdegree of $i$. When $j\in[d(i)]$, then the $j$th neighbour of $i$, denoted $n(i,j)$
is simply the $j$th element of $N(i)$ (with respect to the usual ordering of integers of course).
%If $j>d(i)>0$, then let $n(i,j)$ be $n(i,d(i))$ (that is, the $j$th neighbour is the largest ``real'' neighbour if $j$ is too large).
Note that for any $i\in [n]$ and $j\in[d(i)]$ both $d(i)$ and the $n(i,j)$ (if exists) can be computed in logspace.

We define the automaton $\mathbb{A}=([n+1],[n],\delta,1,\{n+1\})$
where 
\[\delta(i,j)=\left\{\begin{array}{ll}
n+1&\hbox{if }(i=n+1)\hbox{ or }(j=n)\hbox{ or }(i<n\hbox{ and }d(i)<j);\\
1&\hbox{if }i=n\hbox{ and }j<n;\\
n(i,j)&\hbox{otherwise.}\\
\end{array}\right.\]
Note that $\mathbb{A}$ is indeed an automaton, i.e. $\delta(i,j)$ is well-defined for each $i,j$.

We claim that $\mathbb{A}$ admits $P_g$ if and only if $n$ is reachable from $1$ in $G$.
Observe that the underlying graph of $\mathbb{A}$ is $G$, with a new edge $(n,1)$ and with a new vertex $n+1$, which is a neighbour of each vertex.
Hence, $\{n+1\}$ is a sink of $\mathbb{A}$ which is reachable from all other states.
Thus $\mathbb{A}$ admits $P_g$ if and only if there exists a nontrivial component of $\mathbb{A}$ which is different from $\{n+1\}$.
Since in $G$ there are no cycles, such component exists if and only if the addition of the edge $(n,1)$ introduces a cycle,
which happens exactly in the case when $n$ is reachable from $1$. Note that it is exactly the case when $1x=1$ for some word $x\in\Sigma^+$.

What remains is to show that the \emph{reduced} form $\mathbb{B}$ of $\mathbb{A}$ admits $P_g$ if and only if $\mathbb{A}$ does.
First, both $1$ and $n+1$ are in the connected part $\mathbb{A}'$ of $\mathbb{A}$, and are distinguishable by the empty word
(since $n+1$ is final and $1$ is not).
Thus, if $\mathbb{A}$ admits $P_g$ with $1x=1$ and $(n+1)x=n+1$ for some $x\in\Sigma^+$, then $\mathbb{B}$ admits $P_g$ with
$h(1)x=h(1)$ and $h(n+1)x=h(n+1)$ (with $h$ being the homomorphism from the connected part of $\mathbb{A}$ onto its reduced form).
For the other direction, assume $h(p)x_0=h(p)$ for some state $p\neq n+1$
(note that since $n+1$ is the only final state, $p\neq n+1$ if and only if $h(p)\neq h(n+1)$).
Let us define the sequence $p_0,p_1,\ldots$ of states of $\mathbb{A}$ as $p_0=p$, $p_{t+1}=p_tx_0$.
Then, for each $i\geq 0$, $h(p_i)=h(p)$, thus $p_i\in[n]$. Thus, there exist indices $0\leq i<j$ with $p_i=p_j$, yielding
$p_ix_0^{j-i}=p_i$, thus $\mathbb{A}$ admits $P_g$ with $p=p_i$, $q=n+1$, $x=x_0^{j-i}$ and $y=n$.

Hence, the above construction is indeed a logspace reduction
from $\textsc{DAG-Reach}$ to the complement of $\textsc{GenDef}$, showing $\mathbf{NL}$-hardness of the latter;
applying $\mathbf{NL}=\mathrm{co}\mathbf{NL}$ again, we get $\mathbf{NL}$-hardness of $\textsc{GenDef}$ itself.
\end{proof}

It is worth observing that the same construction also shows $\mathbf{NL}$-hardness (thus completeness) of the
problem whether the input automaton accepts a definite language.

Thus, the complexity of the problem is characterized from the theoretic point of view.
However, nondeterministic algorithms are not that useful in practice. Since $\mathbf{NL}\subseteq\mathbf{P}$, the
problem is solvable in polynomial time -- now we give an efficient (quadratic) deterministic decision algorithm:

\begin{enumerate}
\item Compute $\mathbb{A}'=(Q,\Sigma,\delta,q_0,F)$, the reduced form of the input automaton $\mathbb{A}$.
\item Compute $\Gamma(\mathbb{A}')$, the component graph of $\mathbb{A}'$.
\item If there exists a nontrivial, non-sink component, reject the input.
\item Compute $\mathbb{B}=\mathbb{A}'\times\mathbb{A}'$ and $\Gamma(\mathbb{B})$.
\item Check whether there exist a state $(p,q)$ of $\mathbb{B}$ in a nontrivial component (of $\mathbb{B}$)
  for some $p\neq q$ with $p$ being in the same sink as $q$ in $\mathbb{A}$. If so, reject the input; otherwise accept it.
\end{enumerate}

The correctness of the algorithm is straightforward by Theorem~\ref{thm-pattern}: after minimization
(which takes $\mathcal{O}(n\log n)$ time) one computes the component graph of the reduced automaton
(taking linear time) and checks whether there exists a nontrivial component which is not a sink
(taking linear time again, since we already have the component graph). If so, then the answer is $\texttt{NO}$.
Otherwise one has to check whether there is a (sink) component $C$ and a word $x\in\Sigma^+$ such that
$f_x|_C$ has at least two different fixed points. Now it is equivalent to ask
whether there is a state $(p,q)$ in $\mathbb{A}'\times\mathbb{A}'$ with $p$ and $q$ being in the same
component and a word $x\in\Sigma^+$ with $(p,q)x=(p,q)$. This is further equivalent to ask whether
there is a $(p,q)$ with $p,q$ being in the same sink such that $(p,q)$ is in a nontrivial component of $\mathbb{B}$.
Computing $\mathbb{B}$ and its components takes $\mathcal{O}(n^2)$ time, and (since we still have the component graph of $\mathbb{A}$)
checking this condition takes constant time for each state $(p,q)$ of $\mathbb{B}$, the algorithm consumes a total of $\mathcal{O}(n^2)$
time.

Hence we have an upper bound concluding this excursion:
\begin{theorem}
Problem $\textsc{GenDef}$ can be solved in $\mathcal{O}(n^2)$ deterministic time in the RAM model of computation.
\end{theorem}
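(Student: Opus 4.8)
The plan is to turn the characterization of Theorem~\ref{thm-pattern} into a sequence of graph computations and then bound the cost of each on a RAM. The guiding idea is that condition ii) --- every nontrivial component is a sink, and on every sink each nonempty word acts nonpermutationally --- can be decided without enumerating words, by inspecting the component graph of $\mathbb{A}'$ and the component graph of the product automaton $\mathbb{B}=\mathbb{A}'\times\mathbb{A}'$.

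First I would minimize the input to its reduced form $\mathbb{A}'$ and build its component graph $\Gamma(\mathbb{A}')$; minimization costs $\mathcal{O}(n\log n)$, and the component graph is obtainable in linear time via a standard SCC decomposition, with the class array $c$ giving constant-time tests of $p\sim q$ as remarked in the Notation section. A linear scan of $\Gamma(\mathbb{A}')$ then reveals whether some nontrivial component fails to be a sink; by Theorem~\ref{thm-pattern} such a component already witnesses $P_g$, so I reject. For the remaining condition I would observe that a sink $C$ carries a permutational $u|_C$ for some nonempty $u$ if and only if some nonempty word fixes two distinct states $p,q\in C$ at once, i.e. $(p,q)u=(p,q)$ in $\mathbb{B}$; since $p$ and $q$ then share the sink $C$, the reachability edge $y$ demanded by $P_g$ is automatic. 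Hence I would build $\mathbb{B}$ and $\Gamma(\mathbb{B})$ and search for a state $(p,q)$ with $p\neq q$, with $p$ and $q$ lying in a common sink of $\mathbb{A}'$, that sits in a nontrivial component of $\mathbb{B}$; such a state exists exactly when $\mathbb{A}'$ admits $P_g$, so I reject if one is found and accept otherwise.

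For the running time I would tally the steps: minimization in $\mathcal{O}(n\log n)$, the component graph of $\mathbb{A}'$ in $\mathcal{O}(n)$, the product $\mathbb{B}$ together with its component graph in $\mathcal{O}(n^2)$ (it has $n^2$ states), and the final sweep over the $\mathcal{O}(n^2)$ states of $\mathbb{B}$, where each membership test ``$p,q$ share a sink'' and ``$(p,q)$ lies in a nontrivial component'' is constant-time against the precomputed class arrays. The dominant term is $\mathcal{O}(n^2)$. The step I expect to be the genuine content, as opposed to routine bookkeeping, is the equivalence reducing ``some nonempty word acts permutationally on a sink'' to ``a nontrivial product-component contains an off-diagonal pair whose two coordinates share a sink'': one must use that the nontrivial components have already been certified as sinks, so that the $y$-reachability required by $P_g$ comes for free, together with the fact that a permutational action on a finite set forces two states to be fixed by a common power of the word, turning a nontrivial invariant subset into an actual pair of simultaneous fixed points.
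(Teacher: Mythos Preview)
Your proposal is correct and follows essentially the same approach as the paper: minimize, compute the component graph of $\mathbb{A}'$ to detect nontrivial non-sink components, then build $\mathbb{B}=\mathbb{A}'\times\mathbb{A}'$ and its component graph to detect an off-diagonal pair $(p,q)$ in a nontrivial component with $p,q$ sharing a sink, with the same cost accounting at each stage. The paper's correctness argument is exactly the equivalence you isolate at the end --- a permutational action on a sink produces, via a suitable power, two simultaneously fixed states, and conversely two such states inside one sink give both loops and the connecting $y$-edge of $P_g$.
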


\subsubsection{Syntactic complexity}
In~\cite{brzozo} it has been shown that the class of definite languages has syntactic complexity $\geq \lfloor e\cdot(n-1)!\rfloor$,
thus the same lower bound also applies for the larger class of generalized definite languages.

\begin{theorem}
The syntactic complexity of the definite and that of the generalized definite languages coincide.
\end{theorem}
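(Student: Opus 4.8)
The plan is to prove equality of the two syntactic complexities by a sandwich argument. Since every definite language is generalized definite, the syntactic complexity of the generalized definite class dominates that of the definite class at every $n$; moreover the preceding corollary and the cited lower bound from~\cite{brzozo} already pin the definite complexity between $\lfloor e(n-1)!\rfloor$ and $n((n-1)!-(n-3)!)$. So it suffices to show the reverse inequality: the syntactic complexity of generalized definite languages is \emph{at most} that of the definite languages, for which I would show that the transition semigroup of the minimal automaton of any generalized definite language embeds into (or is bounded in size by) a subsemigroup of some $T_m$ consisting only of nonpermutational transformations, so that Theorem~\ref{thm-main} applies with the same bound.

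First I would invoke Theorem~\ref{thm-pattern}: the minimal automaton $\mathbb{A}=(Q,\Sigma,\delta,q_0,F)$ of a generalized definite language has the property that every nontrivial component is a sink, and on each sink $C$ every nonempty word induces a nonpermutational transformation of $C$. The key structural point is that a nonempty word $u$ is fully determined, as a transformation of $Q$, by how it acts once states have been driven into the sinks: for $|u|$ large the image $Qu$ lies entirely inside the union of sinks, and the transition semigroup of $\mathbb{A}$ is governed by the actions $u|_C$ on the individual sinks $C$ together with the ``transient'' behaviour on the non-sink states. I would argue that the element of $\mathcal{T}(\mathbb{A})$ of maximal syntactic complexity is realized when $\mathbb{A}$ has essentially one sink absorbing everything, reducing the generalized definite case to the definite case on that sink.

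The cleanest route I would pursue is a direct comparison of the relevant transformation semigroup with a subsemigroup of $NP_m$. For a sink $C$ the restriction map $u\mapsto u|_C$ sends $\mathcal{T}(\mathbb{A})$ into a subsemigroup of $T_{|C|}$ consisting of nonpermutational transformations, to which Theorem~\ref{thm-main} gives the bound $|C|((|C|-1)!-(|C|-3)!)$. The task is then to show that spreading the states across several sinks, or keeping transient states, cannot beat the single-sink configuration: I would bound $|\mathcal{T}(\mathbb{A})|$ by analysing how an arbitrary transformation in the semigroup is specified by its restrictions to the sinks plus the action on transient states, and argue by a counting/monotonicity estimate that the product of the per-sink bounds over a partition of at most $n$ states is maximized when a single sink of size $n$ is used. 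Concretely this amounts to verifying that $n((n-1)!-(n-3)!)$ dominates any product or combination arising from a finer decomposition, which follows because the map $m\mapsto m((m-1)!-(m-3)!)$ grows fast enough that splitting never helps.

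The main obstacle I expect is the transient states: unlike the pure definite case, a generalized definite minimal automaton may carry non-sink states whose behaviour under long words is eventually absorbed but whose short-word behaviour contributes distinct transformations to $\mathcal{T}(\mathbb{A})$. I would handle this by observing that such contributions are controlled by the action on the sinks together with finitely many ``approach'' steps, and by showing that one can always realize at least as large a transition semigroup using a definite language whose minimal automaton has all $n$ states in a single sink; combined with the already-established upper bound for definite languages via Theorem~\ref{thm-main}, this yields the matching inequality and hence equality of the two syntactic complexities.
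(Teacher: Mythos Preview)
Your argument has a genuine gap: you conflate ``both classes share the upper bound $n((n-1)!-(n-3)!)$'' with ``the two syntactic complexities coincide''. Most of your proposal is devoted to showing that the transition semigroup of a generalized definite minimal automaton is bounded by $n((n-1)!-(n-3)!)$ via Theorem~\ref{thm-main} and a product-over-sinks estimate. Even if that works, it only yields $f_{\mathrm{gendef}}(n)\le n((n-1)!-(n-3)!)$, which is \emph{not} the inequality $f_{\mathrm{gendef}}(n)\le f_{\mathrm{def}}(n)$ you need. The exact value of $f_{\mathrm{def}}(n)$ is unknown (the paper only sandwiches it between $\lfloor e(n-1)!\rfloor$ and $n((n-1)!-(n-3)!)$), so a common upper bound does not force equality. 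Your sentence ``so that Theorem~\ref{thm-main} applies with the same bound'' is precisely where the logic breaks.

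You do gesture at the right idea in your last paragraph --- ``one can always realize at least as large a transition semigroup using a definite language'' --- but you give no construction, and this is where all the work lies. The paper proves the theorem by an explicit construction: given a reduced $\mathbb{A}$ recognizing a generalized definite language on state set $Q=Q_0\uplus Q_1\uplus\ldots\uplus Q_c$ (transient part and sinks, with $|Q_c|$ largest), it builds a new automaton $\mathbb{B}$ on the \emph{same} state set over a larger alphabet, whose letters act as source tuplings $[f_0,\ldots,f_c]$ where $f_0$ ranges over all elevating maps $Q_0\to Q$, each $f_k$ for $0<k<c$ ranges over all of $Q_c^{Q_k}$, and $f_c$ ranges over $\mathcal{T}_c=\{u|_{Q_c}:u\in\mathcal{T}(\mathbb{A})\}$. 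One then checks that $\mathbb{B}$ is reduced, has a single sink $Q_c$, and avoids $P_d$ (hence is definite), while $|\mathcal{T}(\mathbb{B})|\ge\prod_k|\mathcal{T}'_k|\ge\prod_k|\mathcal{T}_k|\ge|\mathcal{T}(\mathbb{A})|$. This bypasses Theorem~\ref{thm-main} entirely and gives the inequality $f_{\mathrm{gendef}}(n)\le f_{\mathrm{def}}(n)$ directly, for every $n$, regardless of what the exact value is.
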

\begin{proof}
It suffices to construct for an arbitrary reduced automaton $\mathbb{A}=(Q,\Sigma,\delta,q_0,F)$ recognizing a generalized definite language
a reduced automaton $\mathbb{B}=(Q,\Delta,\delta',q_0,F')$ for some $\Delta$
recognizing a definite language such that $|\mathcal{T}(\mathbb{A})|\leq |\mathcal{T}(\mathbb{B})|$.

By Theorem~\ref{thm-pattern}, if $L(\mathbb{A})$ is generalized definite and $\mathbb{A}$ is reduced, then $Q$ can be partitioned as a disjoint union
$Q=Q_0\uplus Q_1\uplus\ldots\uplus Q_c$ for some $c>0$ such that each $Q_i$ with $i\in[c]$ is a sink of $\mathbb{A}$ and $Q_0$ is the
(possibly empty) set of those states that belong to a trivial component. Without loss of generality we can assume that
$Q=[n]$ and $Q_0=[k]$ for some $n$ and $k$, and that for each $i\in[k]$ and $a\in \Sigma$, $i<ia$.
The latter condition is due to the fact that reachability restricted to the set $Q_0$ of states in trivial components is a
partial ordering of $Q_0$ which can be extended to a linear ordering.
Clearly, if $Q_0$ is nonempty, then by connectedness $q_0=1$ has to hold; otherwise $c=1$ and we again may assume $q_0=1$.
Also, $Q_i\Sigma\subseteq Q_i$ for each $i\in[c]$, and let $|Q_1|\leq |Q_2|\leq\ldots\leq|Q_c|$.

Then, each transformation $f:Q\to Q$ can be uniquely written as the source tupling $[f_0,\ldots,f_c]$ of some functions $f_i:Q_i\to Q$
with $f_i:Q_i\to Q_i$ for $0<i\leq c$.
For any $[f_0,\ldots,f_c]\in\mathcal{T}=\mathcal{T}(\mathbb{A})$ the following hold: $f_0(i)>i$ for each $i\in[k]$, and $f_j$ is non-permutational
on $Q_j$ for each $j\in[c]$.
For $k=0,\ldots,c$, let $\mathcal{T}_k$ stand for the set $\{f_k:f\in\mathcal{T}\}$ (i.e. the set of functions $f|_{Q_k}$ with
$f\in\mathcal{T}$). Then, $|\mathcal{T}|\leq \mathop\prod\limits_{0\leq k\leq c}|\mathcal{T}_k|$.

If $|Q_c|=1$, then all the sinks of $\mathbb{A}$ are singleton sets.
Thus there are at most two sinks, since if $C$ and $D$ are singleton sinks whose
members do not differ in their finality, then their members are not distinguishable, thus $C=D$ since $\mathbb{A}$ is reduced.
Such automata recognize reverse definite languages,
having a syntactic semigroup of size at most $(n-1)!$ by \cite{brzozo}, 
thus in that case $\mathbb{B}$ can be chosen to an arbitrary definite automaton having $n$ state and a syntactic semigroup of size
at least $\lfloor e(n-1)!\rfloor$ (by the construction in \cite{brzozo}, such an automaton exists).
Thus we may assume that $|Q_c|>1$. (Note that in that case $Q_c$ contains at least one final and at least one non-final state.)

Let us define the sets $\mathcal{T}'_k$ of functions $Q_i\to Q$ as $\mathcal{T}'_0$ is the set of all elevating functions from $[k]$ to $[n]$,
$\mathcal{T}'_c=\mathcal{T}_c$ and for each $0<k<c$, $\mathcal{T}'_k=Q_c^{Q_k}$. Since $\mathcal{T}_k\subseteq Q_k^{Q_k}$ and $|Q_k|\leq |Q_c|$
for each $k\in[c]$, we have $|\mathcal{T}_k|\leq|\mathcal{T}'_k|$ for each $0\leq k\leq c$. Thus defining $\mathcal{T}'=\{[f_0,\ldots,f_c]:f_i\in\mathcal{T}'_i\}$ it holds that $|\mathcal{T}|\leq|\mathcal{T}'|$.

We define $\mathbb{B}$ as $(Q,\mathcal{T}',\delta',q_0,F)$ with $\delta'(q,f)=f(q)$ for each $f\in\mathcal{T}'$. We show that
$\mathbb{B}$ is a reduced automaton avoiding $P_d$, concluding the proof.

First, observe that $\mathbb{B}$ has exactly one sink, $Q_c$, and all the other states belong to trivial components
(since by each transition, each member of $Q_0$ gets elevated, and each member of $Q_i$ with $0<i<c$ is taken into $Q_c$).
Hence if $\mathbb{B}$ admits $P_d$, then $pt=p$ and $qt=q$ for some distinct pair $p,q\in Q_c$ of states and $t=[t_0',\ldots,t_c']\in\mathcal{T}'$.
This is further equivalent to $pt'_c=p$ and $qt'_c=q$ for some $p\neq q$ in $Q_c$ and $t'_c\in\mathcal{T}'_c$.
By definition of $\mathcal{T}'_c=\mathcal{T}_c$, there exists a transformation of the form $t=[t_0,\ldots,t_{c-1},t'_c]\in\mathcal{T}$
induced by some word $x$, thus $px=p$ and $qx=q$ both hold in $\mathbb{A}$,
and since $p,q$ are in the same sink, there also exists a word $y$ with $py=q$. Hence $\mathbb{A}$ admits $P_g$, a contradiction.

Second, $\mathbb{B}$ is connected. To see this, observe that each state $p\neq 1$ is reachable from $1$ by any transformation of the form
$t=[f_p,t_1,\ldots,t_c]$ where $f_p:[k]\to[n]$ is the elevating function with $1f_p=p$ and $if_p=n$ for each $i>1$.
Of course $1$ is also trivially reachable from itself, thus $\mathbb{B}$ is connected.

Also, whenever $p\neq q$ are different states of $\mathbb{B}$, then they are distinguishable by some word.
To see this, we first show this for $p,q\in Q_c$. Indeed, since $\mathbb{A}$ is reduced, some transformation $t=[t_0,\ldots,t_c]\in\mathcal{T}$
separates $p$ and $q$ (exactly one of $pt=pt_c$ and $qt=qt_c$ belong to $F$). Since $\mathcal{T}_c=\mathcal{T}'_c$, we get that
$p$ and $q$ are also distinguishable by in $\mathbb{B}$ by any transformation of the form $t'=[t_0',\ldots,t_{c-1}',t_c]\in\mathcal{T}'$.
Now suppose neither $p$ nor $q$ belong to $Q_c$. Then, since $\{[t_0',\ldots,t_{c-1}']:t_i'\in\mathcal{T}_i'\}=Q_c^{Q\backslash Q_c}$,
and $|Q_c|>1$, there exists some $t=[t_0',\ldots,t_{c-1}']$ with $pt\neq qt$, thus any transformation of the form
$[t_0',\ldots,t_{c-1}',t_c]\in\mathcal{T}'$ maps $p$ and $q$ to distinct elements of $Q_c$, which are already known to be distinguishable,
thus so are $p$ and $q$. Finally, if $p\in Q_c$ and $q\notin Q_c$, then let $t_c\in\mathcal{T}_c$ be arbitrary and
$t'=[t_0',\ldots,t_{c-1}]\in Q_c^{Q\backslash Q_c}$ with $qt'\neq pt_c$. Then $[t',t_c]$ again maps $p$ and $q$ to distinct states of $Q_c$.

Thus $\mathbb{B}$ is reduced, concluding the proof: $\mathbb{B}$ is a reduced automaton recognizing a definite language
and having a syntactic semigroup $\mathcal{T}'$ with $|\mathcal{T}'|\geq|\mathcal{T}|$.
\end{proof}

\section{Conclusion, further directions}
The forbidden pattern characterization of generalized definite languages we gave is not surprising, based on the identities of the pseudovariety
of (syntactic) semigroups corresponding to this variety of languages.
Still, using this characterization one can derive efficient algorithms for checking whether a given automaton recognizes such a language.
Though we could not compute an exact function for the syntactic complexity, we still managed to show that
these languages are not ``more complex'' than definite languages under this metric. Also, we gave a new upper bound for that.

The exact syntactic complexity of definite languages is still open, as well as for other language classes higher in the dot-depth hierarchy --
e.g. the locally (threshold) testable and the star-free languages.

\end{document}